\documentclass{article}

\usepackage{PRIMEarxiv}

\usepackage[utf8]{inputenc}      
\usepackage[T1]{fontenc}         
\usepackage[english]{babel}      
\usepackage{hyperref}            
\usepackage{url}
\usepackage{booktabs}
\usepackage{amsmath,amssymb,amsfonts}
\usepackage{amsthm}
\usepackage{nicefrac}
\usepackage{microtype}
\usepackage{multirow}
\usepackage{makecell}
\usepackage{indentfirst}
\usepackage{fancyhdr}
\usepackage{graphicx}
\usepackage{tikz}
\usepackage[square,authoryear]{natbib}
\graphicspath{{media/}}

\newtheorem{teorema}{Theorem}
\newtheorem{corollario}[teorema]{Corollary}
\newtheorem{lema}[teorema]{Lemma}

\newcommand{\val}{\operatorname{val}}

\title{Greedy approaches for Gold Grabbing on subclasses of split graphs
}

\author{
  Heitor Melo de Lucas Brandão \\
  Institute of Informatics \\
  Universidade Federal de Goiás \\
  Goiânia-GO, Brazil \\
  \texttt{heitormelo26@discente.ufg.br} \\
   \And
  Hebert Coelho da Silva \\
  Institute of Informatics \\
  Universidade Federal de Goiás \\
  Goiânia-GO, Brazil \\
  \texttt{hebert@ufg.br} \\
   \And
  Julliano Rosa Nascimento \\
  Institute of Informatics \\
  Universidade Federal de Goiás \\
  Goiânia-GO, Brazil \\
  \texttt{jullianonascimento@ufg.br} \\
}

\begin{document}
\maketitle

\begin{abstract}
The Gold Grabbing Game is a combinatorial game on vertex-weighted graphs in which two players alternately remove vertices while maintaining graph connectivity, aiming to maximize the total collected weight. Although the literature has primarily focused on strategies that guarantee victory, the question of optimality --- i.e., maximizing total gain --- remains less explored. In this work, we investigate the behavior of the greedy strategy in this setting. We show that this approach is not optimal for general split graphs, highlighting structural limitations of this class. On the other hand, we prove that for complete split graphs $CS_{(2,n)}$, the greedy strategy yields a sequence of moves that maximizes the game value. As a consequence, the first player does not lose when the number of vertices is even. These results contribute to a better understanding of the structural conditions that ensure the optimality of simple strategies in graph-based combinatorial games.
\end{abstract}

\keywords{Gold Grabbing Game \and Game Theory on Graphs \and Greedy Strategy \and Optimal Strategy \and Split Graphs}

\section{Introduction}

The \emph{Gold Grabbing} (GG) game is a two-player combinatorial game defined on a connected graph $G = (V,E,w)$, where $w: V\to \mathbb{Z}^+$ assigns positive weights to the vertices. Alice and Bob take alternating turns removing vertices from the graph, with Alice moving first. In each round, the player to move must choose a vertex that is not a cut vertex and, by doing so, collects the corresponding weight (gold) of the vertex as score. The game ends when every vertex of the graph has been removed, and the winner is the player with the larger sum of collected weights.

The game was originally proposed by \cite{Winkler-2004} in the context of dividing a pizza between two players. Later, \cite{micek-2011} generalized the game to graphs, when it was named the \emph{Gold Grabbing Game}; moreover, they proved that Alice secures at least $1/4$ of the total weight of any even tree. That proof was extended in~\cite{seacrest-2012}, who proved that Alice secures at least half of the weight on even trees, that is, she does not lose the game, and conjectured that the same holds for every even bipartite graph. This conjecture remains open and has been the subject of further work establishing it for some subclasses of bipartite graphs: \cite{kmn-tree-2018} proved it for $K_{m,n}$-trees, and~\cite{blowupTree-2020} for \textit{blow-ups} of trees and cycles. Furthermore, regarding the complexity of the game, \cite{cibulka-2013} showed that GG is PSPACE-complete, evidencing the intrinsic computational hardness of the problem. In \cite{araujo2025teoria}, other combinatorial games on graphs are explored, which attests to the recent interest in this area.

That said, the state of the art on this game is predominantly focused on guaranteeing victory for one of the players, mainly Alice. A distinct question, and one that is far less explored, concerns \emph{optimality}, that is, which approach yields the sequence that optimizes a player's gain, maximizing $\val(G)$, explained in Section \ref{sec:preliminares}. In this sense, it is worth studying whether certain approaches enjoy such \emph{optimality}, such as the greedy one, which consists in choosing, at each step, the locally optimal option according to a fixed criterion, without considering future consequences, seeking to directly maximize the value obtained in each move. This question was partially explored by \cite{brandao2025goldgrabbing}, who showed that the greedy strategy is optimal for complete graphs $K_n$ and that for complete bipartite graphs $K_{m,n}$ it is a winning strategy.

A natural class in which to extend these results would be that of split graphs, which generalizes complete graphs and bears similarity to complete bipartite graphs. Nevertheless, as we show in Section~\ref{sec:gulosa}, the greedy approach is not optimal for split graphs in general. Because of this, we restrict our analysis to the subclass of complete split graphs $CS_{(2,n)}$. A \textit{complete split graph} $CS_{(p,n)}$ is a graph with $p+n$ vertices, composed of a clique $C$ of size $p$ and an independent set $I$ of size $n$, with $1 \leq p$ and $1 \leq n$, in which every vertex of the clique is adjacent to every vertex of the independent set.

In this work, we prove that the greedy strategy is optimal for the \textit{Gold Grabbing Game} on the complete split graphs $CS_{(1,n)}$ and $CS_{(2,n)}$. The graph $CS_{(2,n)}$ is particularly interesting because it is the smallest class of complete split graphs in which cut vertices arise naturally in the subgames, that is, as the rounds proceed new cut vertices appear (when a clique vertex is removed, the resulting graph is a star whose center is a cut vertex), which makes the analysis significantly more complex than for complete graphs.

The remainder of the paper is organized as follows. In Section \ref{sec:preliminares}, we present the definitions used throughout the work. Subsequently, in Section \ref{sec:gulosa}, we give a counterexample showing the failure of the greedy approach on general split graphs and why it fails; moreover, we prove the optimality of this approach for star graphs $K_{1,n}$, and we develop the preparatory lemmas and the main theorem for $CS_{(2,n)}$. Finally, in Section~\ref{sec:conclusão}, we present the conclusions and directions for future work.

\section{Preliminaries}
\label{sec:preliminares}
We define a graph $G$ as a pair of two finite sets $(V, E)$, where $V$ is the vertex set and $E$ is a subset of the unordered pairs of $V$. A graph $G$ is connected if, for any pair of vertices $u,v$, there is a path with endpoints $u$ and $v$. A vertex $v$ is called a \emph{cut vertex} if $G - v$ is not connected. Accordingly, a move $v$ is \emph{feasible} if $G - v$ remains connected, that is, if $v$ is not a cut vertex (or if $|V(G)| = 1$).

In the work of \cite{kmn-tree-2018}, a recurrence was defined that expresses the \emph{game value}, that is, the ``aggregate value'' of each feasible move, computed by brute force and defined below:
\begin{equation}\label{eq:val}
  \val(G) = \begin{cases}
    0, & \text{if } V(G) = \emptyset, \\
    \max\{w(v) - \val(G - v) \mid v \text{ is feasible}\}, & \text{if } V(G) \neq \emptyset.
  \end{cases}
\end{equation}

A move $v$ is \emph{optimal} if $\val(G) = w(v) - \val(G - v)$. If $\val(G) > 0$, Alice wins under optimal play; if $\val(G) = 0$, the game is a draw; if $\val(G) < 0$, Bob wins.

Figure~\ref{fig:exemplo-valG-P4} illustrates the computation of $\val(G)$ for a path $P_4: abcd$ with weights $w(a)=5$, $w(b)=4$, $w(c)=1$, $w(d)=3$. On the left, the decision tree makes explicit all possible sequences of moves, taking into account that, in each turn, the player chooses among the feasible vertices. On the right, the recursion tree generated by $\val(G)$ following Equation~\ref{eq:val} in a \emph{bottom-up} fashion: starting from the base cases $\val(\emptyset) = 0$, the values propagate up to the root, where each node stores $w(v) - \val(G - v)$ for the chosen vertex $v$. The value at the root indicates the advantage of the player to move under optimal play. In this example, the optimal sequence is $[a, b, d, c]$ with $\val(P_4) = 3$.

Note that the optimal sequence is obtained after computing $\val(G)$ for the graph and then greedily choosing the vertex whose ``aggregate value'' is largest, this being the optimal strategy for GG. The \emph{greedy strategy}, in turn, consists in each player, on their turn, choosing the available feasible vertex of largest weight:
\[
  v^* = \arg\max\{w(v) \mid v \text{ is feasible in } G\}.
\]

We say that the greedy strategy is \emph{optimal} on a class of graphs if, for every instance of the class, the sequence of moves produced by the greedy strategy coincides with at least one of the sequence(s) realizing $\val(G)$, since it is possible to have more than one vertex sequence with the same ``aggregate value''.

\begin{figure}[htbp]
\centering

\begin{tikzpicture}[scale=0.9, every node/.style={font=\scriptsize},
vertex/.style={circle,draw,fill=black!20,minimum size=0.6cm,inner sep=0pt}]
    \node[vertex] (v1) at (0,-1.2){$a$};
    \node[vertex] (v2) at (1.5,-1.2) {$b$};
    \node[vertex] (v3) at (3,-1.2) {$c$};
    \node[vertex] (v4) at (4.5,-1.2) {$d$};

    \node at (0,-0.6) {$(a:5)$};
    \node at (1.5,-0.6) {$(b:4)$};
    \node at (3,-0.6) {$(c:1)$};
    \node at (4.5,-0.6) {$(d:3)$};

    \draw (v1)--(v2)--(v3)--(v4);

    \node at (2.25,-2) {\scriptsize Graph $G = P_4$};
\end{tikzpicture}

\vspace{0.8cm}

\begin{minipage}{0.48\textwidth}
\centering
\begin{tikzpicture}[scale=0.7, every node/.style={font=\scriptsize},
vertex/.style={circle,draw,fill=black!20,minimum size=0.6cm,inner sep=0pt},
redvertex/.style={circle,draw,fill=red!40,minimum size=0.6cm,inner sep=0pt}]
\node[vertex] (a) at (0.5,-0.5) {$G$};

\node[redvertex] (b) at (-1.5,-1.5) {$a:5$};
\node[vertex] (c) at (2.5,-1.5) {$d:3$};
\draw[->] (a) -- (b);
\draw[->] (a) -- (c);

\node[redvertex] (d) at (-2.5,-3) {$b:4$};
\node[vertex] (e) at (-0.5,-3) {$d:3$};
\node[vertex] (f) at (1.5,-3) {$a:5$};
\node[vertex] (g) at (3.5,-3) {$c:1$};
\draw[->] (b) -- (d);
\draw[->] (b) -- (e);
\draw[->] (c) -- (f);
\draw[->] (c) -- (g);

\node[vertex] (h) at (-3,-4.5) {$c:1$};
\node[redvertex] (i) at (-2,-4.5) {$d:3$};
\node[vertex] (j) at (-1,-4.5) {$b:4$};
\node[vertex] (k) at (0,-4.5) {$c:1$};
\node[vertex] (l) at (1,-4.5) {$b:4$};
\node[vertex] (m) at (2,-4.5) {$c:1$};
\node[vertex] (o) at (3,-4.5) {$a:5$};
\node[vertex] (r) at (4,-4.5) {$b:4$};
\draw[->] (d) -- (h);
\draw[->] (d) -- (i);
\draw[->] (e) -- (j);
\draw[->] (e) -- (k);
\draw[->] (f) -- (l);
\draw[->] (f) -- (m);
\draw[->] (g) -- (o);
\draw[->] (g) -- (r);

\node[vertex] (s) at (-3,-6) {$d:3$};
\node[redvertex] (t) at (-2,-6) {$c:1$};
\node[vertex] (u) at (-1,-6) {$c:1$};
\node[vertex] (v) at (0,-6) {$b:4$};
\node[vertex] (w) at (1,-6) {$c:1$};
\node[vertex] (x) at (2,-6) {$b:4$};
\node[vertex] (y) at (3,-6) {$b:4$};
\node[vertex] (z) at (4,-6) {$c:1$};
\draw[->] (h) -- (s);
\draw[->] (i) -- (t);
\draw[->] (j) -- (u);
\draw[->] (k) -- (v);
\draw[->] (l) -- (w);
\draw[->] (m) -- (x);
\draw[->] (o) -- (y);
\draw[->] (r) -- (z);
\end{tikzpicture}
\end{minipage}
\hfill
\begin{minipage}{0.48\textwidth}
\centering
\begin{tikzpicture}[scale=0.7, every node/.style={font=\scriptsize},
vertex/.style={circle,draw,fill=black!20,minimum size=0.6cm,inner sep=0pt},
redvertex/.style={circle,draw,fill=red!40,minimum size=0.6cm,inner sep=0pt}]
\node[vertex] (a) at (0.5,-0.5) {$G$};

\node[redvertex] (b) at (-1.5,-1.5) {$3$};
\node[vertex] (c) at (2.5,-1.5) {$0$};
\draw[->] (a) -- (b);
\draw[->] (a) -- (c);

\node[redvertex] (d) at (-2.5,-3) {$2$};
\node[vertex] (e) at (-0.5,-3) {$0$};
\node[vertex] (f) at (1.5,-3) {$2$};
\node[vertex] (g) at (3.5,-3) {$-2$};
\draw[->] (b) -- (d);
\draw[->] (b) -- (e);
\draw[->] (c) -- (f);
\draw[->] (c) -- (g);

\node[vertex] (h) at (-3,-4.5) {$-2$};
\node[redvertex] (i) at (-2,-4.5) {$2$};
\node[vertex] (j) at (-1,-4.5) {$3$};
\node[vertex] (k) at (0,-4.5) {$-3$};
\node[vertex] (l) at (1,-4.5) {$3$};
\node[vertex] (m) at (2,-4.5) {$-3$};
\node[vertex] (o) at (3,-4.5) {$1$};
\node[vertex] (r) at (4,-4.5) {$3$};
\draw[->] (d) -- (h);
\draw[->] (d) -- (i);
\draw[->] (e) -- (j);
\draw[->] (e) -- (k);
\draw[->] (f) -- (l);
\draw[->] (f) -- (m);
\draw[->] (g) -- (o);
\draw[->] (g) -- (r);

\node[vertex] (s) at (-3,-6) {$3$};
\node[redvertex] (t) at (-2,-6) {$1$};
\node[vertex] (u) at (-1,-6) {$1$};
\node[vertex] (v) at (0,-6) {$4$};
\node[vertex] (w) at (1,-6) {$1$};
\node[vertex] (x) at (2,-6) {$4$};
\node[vertex] (y) at (3,-6) {$4$};
\node[vertex] (z) at (4,-6) {$1$};
\draw[->] (h) -- (s);
\draw[->] (i) -- (t);
\draw[->] (j) -- (u);
\draw[->] (k) -- (v);
\draw[->] (l) -- (w);
\draw[->] (m) -- (x);
\draw[->] (o) -- (y);
\draw[->] (r) -- (z);
\end{tikzpicture}
\end{minipage}

\caption{Example of $\val(P_4)$.}
\label{fig:exemplo-valG-P4}
\end{figure}

\subsection{The Graph $CS_{(2, n)}$}

Within the scope of this paper, we adopt the following terminology for $CS_{(2, n)}$: given $n \geq 1$, with $|V(CS_{(2, n)})| = n + 2$ vertices, $CS_{(2, n)}$ is composed of a clique $C = \{a, b\}$ (whose elements we call \emph{centers}) and an independent set $I = \{i_1, i_2, \ldots, i_n\}$ (whose elements we call \emph{independent vertices}). Formally:
\[
  V = \{a, b\} \cup \{i_1, \ldots, i_n\}, \qquad
  E = \{ab\} \cup \{ai_k : 1 \leq k \leq n\} \cup \{bi_k : 1 \leq k \leq n\}.
\]
Each independent vertex forms a triangle with the two centers. Equivalently, \(CS_{(2,n)}\) is obtained from the complete bipartite graph \(K_{2,n}\) by adding the edge \(ab\) between the vertices of the part of size 2, turning that part into a clique.

Figure \ref{fig:cs23_exemplo} illustrates an example of this class and will be used throughout this paper to ease the understanding of the lemmas and theorems.

\begin{figure}[h]
\centering
\begin{tikzpicture}[
  every node/.style={draw, circle, minimum size=9mm, inner sep=0pt, font=\small},
  clique/.style={fill=gray!30},
  indep/.style={fill=white},
  thick
]
  \node[clique] (a) at (-1.2, 0) {$a$};
  \node[clique] (b) at ( 1.2, 0) {$b$};
  \node[indep] (i1) at (-2.5, -2.2) {$i_1$};
  \node[indep] (i2) at ( 0,   -2.2) {$i_2$};
  \node[indep] (i3) at ( 2.5, -2.2) {$i_3$};
  \draw[very thick] (a) -- (b);
  \draw (a) -- (i1); \draw (a) -- (i2); \draw (a) -- (i3);
  \draw (b) -- (i1); \draw (b) -- (i2); \draw (b) -- (i3);
  \node[draw=none, fill=none, font=\scriptsize] at (-1.2, 0.7) {$w(a)=10$};
  \node[draw=none, fill=none, font=\scriptsize] at (1.2, 0.7) {$w(b)=6$};
  \node[draw=none, fill=none, font=\scriptsize] at (-2.5, -2.9) {$w(i_1)=8$};
  \node[draw=none, fill=none, font=\scriptsize] at (0, -2.9) {$w(i_2)=5$};
  \node[draw=none, fill=none, font=\scriptsize] at (2.5, -2.9) {$w(i_3)=3$};
  \node[draw=none, fill=none, font=\footnotesize] at (0, 1.2) {Clique $C$};
  \node[draw=none, fill=none, font=\footnotesize] at (0, -3.5) {Independent set $I$};
\end{tikzpicture}
\caption{The graph $CS_{(2,3)}$.}
\label{fig:cs23_exemplo}
\end{figure}

In this example of Figure \ref{fig:cs23_exemplo}, all five vertices are feasible (each one belongs to at least one triangle). Computing $\val(G)$ by Equation~\ref{eq:val}, the first player has five options. Figure~\ref{fig:val_cs23} shows the value $w(v) - \val(G - v)$ for each possible choice of Alice in the first round:

\begin{figure}[h]
\centering
\begin{tikzpicture}[scale=0.85, every node/.style={font=\small},
  vertex/.style={circle, draw, minimum size=8mm, inner sep=0pt},
  opt/.style={circle, draw, fill=red!30, minimum size=8mm, inner sep=0pt}]
 
  \node[vertex] (root) at (0, 0) {$G$};
 
  \node[opt]    (a)  at (-5, -2) {$4$};
  \node[vertex] (b)  at (-2.5, -2) {$-4$};
  \node[vertex] (i1) at (0, -2) {$0$};
  \node[vertex] (i2) at (2.5, -2) {$0$};
  \node[vertex] (i3) at (5, -2) {$0$};
 
  \draw[->] (root) -- (a)  node[midway, above left, font=\scriptsize] {$a(10)$};
  \draw[->] (root) -- (b)  node[midway, left, font=\scriptsize] {$b(6)$};
  \draw[->] (root) -- (i1) node[midway, left, font=\scriptsize] {$i_1(8)$};
  \draw[->] (root) -- (i2) node[midway, right, font=\scriptsize] {$i_2(5)$};
  \draw[->] (root) -- (i3) node[midway, above right, font=\scriptsize] {$i_3(3)$};
 
  \node[draw=none, font=\scriptsize] at (-5, -3) {$K_{1,3}$};
  \node[draw=none, font=\scriptsize] at (-2.5, -3) {$K_{1,3}$};
  \node[draw=none, font=\scriptsize] at (0, -3) {$CS_{(2,2)}$};
  \node[draw=none, font=\scriptsize] at (2.5, -3) {$CS_{(2,2)}$};
  \node[draw=none, font=\scriptsize] at (5, -3) {$CS_{(2,2)}$};
 
\end{tikzpicture}
\caption{First level of the $\val(CS_{(2,3)})$ tree.}
\label{fig:val_cs23}
\end{figure}

The maximum is $4$, obtained by choosing $a$ (weight $10$), whose removal produces the star $K_{1,3}$ with center $b$ (weight $6$) and leaves $\{8, 5, 3\}$, for which $\val(K_{1,3}) = 6$. The greedy strategy also chooses $a$ (the feasible vertex of largest weight), confirming that they coincide. The complete sequence of the game, both under the greedy and under the optimal strategy, is $[a,i_1,i_2,b,i_3]$, which grants Alice 18 in weight and 14 to Bob, hence $\val(G) = 4$.

In the following sections, we will verify that the preparatory lemmas reproduce these values.

\section{Greedy Approaches}
\label{sec:gulosa}

In this section we study the use of the greedy approach on two classes of graphs, aiming to draw conclusions about its optimality. For this section, we will use the fact that the greedy approach is optimal for complete graphs $K_n$, which is trivial, since in any vertex sequence every vertex is feasible.

\subsection{Counterexample for General split Graphs}\label{sec:contra_split}

The class of split graphs generalizes complete graphs and stars, on which we already know the greedy strategy is optimal (the proof for stars is in Section \ref{sec:guloso-estrela}). However, optimality does not extend to the split class as a whole. Figure~\ref{fig:contra_split} displays a split graph $G$ with clique $\{d, c\}$ and independent set $\{a,b\}$, where $b$ is adjacent only to~$c$ and $a$ is adjacent to $d$ and $b$.

\begin{figure}[h]
\centering
\begin{tikzpicture}[
  every node/.style={draw, circle, minimum size=8mm, inner sep=0pt, font=\small},
  clique/.style={fill=gray!30},
  indep/.style={fill=white},
  thick
]
  \node[indep] (a) at (0, 1.5) {$a$};
  \node[clique] (d) at (-1.3, -0.5) {$d$};
  \node[clique] (c) at (1.3, -0.5) {$c$};
  \node[indep] (b) at (3, 1.5) {$b$};

  \draw[very thick] (a) -- (d);
  \draw[very thick] (a) -- (c);
  \draw[very thick] (d) -- (c);

  \draw (b) -- (c);

  \node[draw=none, fill=none, font=\scriptsize] at (0, 2.2) {$w(a)=50$};
  \node[draw=none, fill=none, font=\scriptsize] at (-1.3, -1.2) {$w(d)=2$};
  \node[draw=none, fill=none, font=\scriptsize] at (1.3, -1.2) {$w(c)=100$};
  \node[draw=none, fill=none, font=\scriptsize] at (3, 2.2) {$w(b)=51$};

  \node[draw=none, fill=none, font=\footnotesize] at (0, -0.9) {Clique};
  \node[draw=none, fill=none, font=\footnotesize] at (3, 0.8) {Indep.};
\end{tikzpicture}
\caption{A split (non-complete) graph on which the greedy strategy fails.}
\label{fig:contra_split}
\end{figure}

In this graph, the vertex $c$ is a cut vertex (its removal isolates $b$), so that the feasible vertices are $\{a, b, d\}$. The greedy strategy chooses $b$. After the removal of~$b$, the remaining graph is the triangle $\{a, d, c\}$, in which Bob takes $c$. The greedy sequence produces $[b, c, a, d]$, with Alice collecting $51 + 50 = 101$ and Bob collecting $100 + 2 = 102$, resulting in ``$\val_{\text{greedy}}\text{''} = -1$.

However, if Alice plays optimally and chooses $a$, the remaining graph is a star with center~$c$ and leaves $\{b, d\}$, in which $c$ is a cut vertex. Bob can take only $b$ or $d$. Since he will make the vertex $c$ available regardless of his choice, he takes $b$ for being the heavier one. Thus, Alice takes $c$ and Bob is left with $d$. In this example, the optimal sequence is $[a, b, c, d]$, with Alice collecting $50 + 100 = 150$ and Bob collecting $51 + 2 = 53$, resulting in $\val(G) = 97$. Whereas under the greedy strategy the sequence is $[b, c, a, d]$ with ``$\val_{\text{greedy}}\text{''} = -1$.

The difference is striking; the reason for the failure lies in the fact that $b$ is connected only to $c$, and not to the whole clique. By taking $b$ (the heaviest feasible vertex), Alice ``releases'' the triangle to Bob, who captures $c$. That is, the optimal move sacrifices immediate weight ($50 < 51$) in order to control the access to $c$.

This counterexample shows that the absence of complete adjacency between the independent set and the clique --- that is, the fact that the graph is split but not complete split --- is the source of the failure of the greedy strategy. And this is what motivates restricting the analysis to the class $CS_{(2,n)}$, in which complete adjacency is guaranteed and, therefore, so is the optimality of the greedy approach, proved in Section \ref{sec:gulosa_otima_CS}.

\subsection{Star Graph $K_{1,n}$}
\label{sec:guloso-estrela}
Let $G = K_{1,n}$ be a star graph with center $c$ and leaves $\{l_1, \ldots, l_n\}$, where $w(l_1) \geq \cdots \geq w(l_n)$. The fundamental structural property is that $c$ is a cut vertex as long as there are at least two leaves, so that the feasible vertices are exclusively the leaves until only two vertices remain.

\begin{teorema}\label{teo:gulosa_estrela}
The greedy strategy is optimal on $K_{1,n}$ for every $n \geq 1$.
\end{teorema}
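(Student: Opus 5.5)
We argue by strong induction on $n$, proving a closed form for $\val(K_{1,n})$ and checking that the greedy move attains it at every stage. With $w(l_1)\ge\cdots\ge w(l_n)$, the game on a star is forced: the center $c$ is a cut vertex while at least two leaves remain, so the first $n-1$ moves are leaf removals. Once only $c$ and one leaf $l$ remain, both are feasible, and the player to move takes the heavier, collecting $\max\{w(c),w(l)\}$; the opponent gets the other. The only real choice is \emph{which} leaves are removed and hence which leaf survives to the last pair. So we treat the game as the sorted list $w(l_1),\dots,w(l_n)$ with a final two-vertex endgame.

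The conjectured formula is the alternating sum of the greedy sequence. For $n=1$ the graph is $K_2$, which is complete, so the greedy move is optimal by the remark at the start of Section~\ref{sec:gulosa}. For $n\ge 2$ we prove
\[
\val(K_{1,n}) = \sum_{k=1}^{n-1}(-1)^{k+1}w(l_k) + (-1)^{n+1}\bigl(\max\{w(c),w(l_n)\}-\min\{w(c),w(l_n)\}\bigr).
\]
The intended play is that players strip leaves in decreasing order and the lightest leaf $l_n$ is left for the endgame. The inductive step uses equation~\eqref{eq:val}. Removing leaf $l_j$ gives $K_{1,n-1}$ on the remaining leaves, so
\[
\val(K_{1,n})=\max_j\{w(l_j)-\val(K_{1,n-1}\text{ without } l_j)\}.
\]
We substitute the formula inductively, with $n=2$ as the base case since it is just $P_3$.

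The main obstacle is showing that $j=1$ maximizes this expression, so that greedy is optimal. This is an exchange argument. Compare removing $l_1$ with removing some $l_j$, $j\ge 2$. The two remaining sorted lists differ only in that one contains $l_j$ where the other contains $l_1$. Expanding the alternating sums, the difference between the two candidate values reduces to $w(l_1)-w(l_j)$ multiplied by a sign pattern whose partial sums are nonnegative. The cleanest way to see this is as a telescoping sum over consecutive leaves $l_1,\dots,l_j$, using monotonicity $w(l_k)\ge w(l_{k+1})$, which gives a nonnegative difference. The endgame term needs care when $j=n$, because then the surviving leaf is $l_{n-1}$ rather than $l_n$. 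This case is handled by checking $|w(c)-x|$ against $x$ as a function of the surviving leaf weight $x$. The argument is that taking a lighter survivor never helps the player who moved it into the endgame, since $x-|w(c)-x|$ is nondecreasing in $x$.

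Finally, we verify that the greedy strategy produces exactly this sequence. The greedy player always takes $l_1$ of the current star. At the last two vertices it takes the heavier of $c$ and the surviving leaf, which is the optimal endgame move by the $n=1$ case. So the greedy sequence realizes $\val(K_{1,n})$, which establishes Theorem~\ref{teo:gulosa_estrela}.
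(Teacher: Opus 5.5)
Your proof is correct, and it takes a genuinely different and more careful route than the paper's. The paper argues termwise. It claims $w(l_1)\ge w(l_k)$ and, separately, $\val(G-l_1)\le\val(G-l_k)$ for all $k\ge 2$, on the grounds that ``the opponent faces a set of leaves with smaller weights''; otherwise it leans on the analogy with $K_n$. That second inequality is false in general, because the star value is an alternating sum and is not monotone in the leaf weights. With $w(c)=1$ and leaves $11,10,1$ one gets $\val(G-l_1)=10>2=\val(G-l_3)$. Greedy is still optimal there, since $w(l_1)-\val(G-l_1)=1>-1=w(l_3)-\val(G-l_3)$.

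You first fix the closed form by induction and then compare the \emph{combined} quantities $g(j)=w(l_j)-\val(G-l_j)$; this is exactly what makes the argument sound. Along the way it proves the formula the paper states as Lemma~\ref{lema:val_estrela}, which the paper only derives from this theorem. It also treats explicitly the endgame case $j=n$, which the $K_n$ analogy glosses over. The cost is a longer computation.

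Two details need tightening. First, the phrase ``$w(l_1)-w(l_j)$ multiplied by a sign pattern'' is imprecise. For $2\le j\le n-1$ the telescoping actually gives
\[
g(1)-g(j)=\sum_{i=1}^{j-1}\bigl(1+(-1)^{i-1}\bigr)\bigl(w(l_i)-w(l_{i+1})\bigr)\ge 0 .
\]
Second, for $j=n$ the sum stops at $i=n-2$ and is followed by the endgame term $\bigl(w(l_{n-1})-w(l_n)\bigr)-(-1)^n\bigl(|w(c)-w(l_n)|-|w(c)-w(l_{n-1})|\bigr)$. For odd $n$ this is nonnegative because $x-|w(c)-x|$ is nondecreasing, as you say. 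For even $n$ you need instead that $x+|w(c)-x|$ is nondecreasing; this is already visible at $n=2$. Both facts hold because $x\mapsto|w(c)-x|$ is $1$-Lipschitz, so this is a one-line repair rather than a gap.
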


\begin{proof}
As long as $|V(G)| \geq 3$, the center $c$ is a cut vertex and the feasible vertices are only the leaves. Since the leaves are mutually non-adjacent, the removal of any leaf does not change the feasibility of the others. Thus, the situation is structurally identical to playing GG on a complete graph $K_n$ formed by the leaves, where the greedy strategy is optimal: the players remove leaves in non-increasing order of weight.

To verify that this coincides with $\val(G)$, it suffices to apply $K_{1,n}$ to Equation \ref{eq:val} and observe that in the recursion $\val(K_{1,n}) = \max\{w(l_i) - \val(G - l_i)\}$, the leaf $l_1$ (of largest weight) satisfies:
\[
w(l_1) - \val(G - l_1) \geq w(l_k) - \val(G - l_k), \quad \forall\, k \geq 2,
\]
since $w(l_1) \geq w(l_k)$ and $\val(G - l_1) \leq \val(G - l_k)$ --- upon removing the heaviest leaf, the opponent faces a set of leaves with smaller weights. Applying this argument inductively at each round, the greedy strategy yields $\val(G)$.

When two vertices $\{c, l_n\}$ remain, both are feasible and the player to move takes the heavier one, which is trivially optimal; that is, if both players used the greedy approach, the sequence of vertices removed in GG will be the same as if they had used the optimal approach.
\end{proof}

\subsection{$CS_{(2,n)}$ Graphs}
\label{sec:gulosa_otima_CS}

In this subsection, we prove that the greedy strategy is optimal on $CS_{(2,n)}$ for every $n \geq 1$. The proof requires a series of preparatory lemmas about the structure of the subgames and the game value ($\val(G)$) on stars.

\smallskip
\noindent \textbf{Notation.} For a center $c$ with weight $w(c)$ and a set of leaves with weights $L = \{l_1, \cdots, l_k\}$, with $l_1 \geq \cdots \geq l_k$, we denote by $E(c, L)$ the star $K_{1,|L|}$ with center $c$ and leaves with weights given by $L$. We define the \emph{partial alternating sums} $A_j = \sum_{i=1}^{j} (-1)^{i-1} l_i$ for $j \geq 1$, with $A_0 = 0$. For $l_m \in L$, we write $L^{-m} = L \setminus \{l_m\}$.
\smallskip

\subsubsection{Structural lemmas}

The first two lemmas establish the basic properties that make the induction work: every vertex is feasible, and the resulting subgames belong to classes already covered by the induction hypothesis.

\begin{lema}[Initial Feasibility]\label{lema:cs_viavel}
In a $CS_{(2,n)}$, every vertex is feasible in the first round.
\end{lema}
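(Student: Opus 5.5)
The plan is to show directly that $CS_{(2,n)} - v$ is connected for each vertex $v$. By the symmetry of the definition it suffices to treat two cases: $v$ a center (say $a$, the case $b$ being identical after swapping names), and $v$ an independent vertex $i_k$. The degenerate case $|V|=1$ does not arise, since $n\ge 1$ gives at least three vertices.

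\emph{Case $v=a$.} In $CS_{(2,n)}-a$ the vertex $b$ is still adjacent to every $i_j$, because the edges $bi_j$ all belong to $E$. The remaining graph is therefore the star $K_{1,n}$ with center $b$, which is connected. The case $v=b$ is the same, with center $a$.

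\emph{Case $v=i_k$.} In $CS_{(2,n)}-i_k$ the edge $ab$ survives, and every remaining $i_j$ ($j\neq k$) is adjacent to $a$ (and to $b$). Any two remaining vertices are therefore joined by a path of length at most $2$ through $a$, so the graph is connected. In fact it is $CS_{(2,n-1)}$, or the single edge $ab$, i.e.\ $K_2$, when $n=1$.

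Equivalently, one can note that every vertex lies on a triangle $a\,b\,i_j$, and deleting a vertex of a triangle in a graph where every other vertex is adjacent to a surviving triangle vertex keeps the graph connected. I would present the direct two-case argument, because it also records the subgames that the later induction needs: a star $E(c,L)$ after removing a center, and a $CS_{(2,n-1)}$ after removing an independent vertex. There is no real obstacle here. The only point needing care is the boundary case $n=1$, where $CS_{(2,1)}=K_3$ and every vertex is trivially feasible.
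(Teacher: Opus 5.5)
Your proof is correct, and it takes a different route from the paper's. The paper argues in one line: every vertex of $CS_{(2,n)}$ lies on a triangle $\{a,b,i_j\}$, and ``no vertex belonging to a cycle is a cut vertex.'' That general principle is false as stated. In two triangles sharing a single vertex (a bowtie), the shared vertex lies on a cycle yet is a cut vertex. So the paper's argument only works once one adds the extra hypothesis in your ``equivalent'' remark (every other vertex is adjacent to a surviving vertex of the triangle), or checks the connectivity of $G-v$ directly, as you do. Your two-case check is therefore the rigorous version. It costs a few more lines, but it also identifies the resulting subgames: a star $K_{1,n}$ after deleting a center, and $CS_{(2,n-1)}$ after deleting an independent vertex. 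In effect it proves Lemma~\ref{lema:cs_sub} at the same time, which the paper states and proves separately.
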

\begin{proof}
Every vertex $v \in V(CS_{(2,n)})$ belongs to at least one triangle ($K_3$), since each independent vertex forms a triangle with the two centers, and the centers are shared by all triangles. As no vertex belonging to a cycle is a cut vertex, all of them are feasible.
\end{proof}

\begin{lema}[Subgames of $CS_{(2,n)}$]\label{lema:cs_sub}
In $CS_{(2,n)}$, with $n \geq 1$, removing an independent vertex results in $CS_{(2,n-1)}$ and removing a center results in a $K_{1,n}$.
\end{lema}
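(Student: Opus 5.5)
The plan is to verify the two claims directly from the definition of $CS_{(2,n)}$, by computing the vertex and edge sets of each resulting induced subgraph. Throughout, write $V = \{a,b\} \cup \{i_1,\ldots,i_n\}$ and $E = \{ab\} \cup \{ai_k, bi_k : 1 \le k \le n\}$. By Lemma~\ref{lema:cs_viavel}, every such removal is a legal first move, so the resulting graphs are genuine subgames.

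\emph{Removing an independent vertex.} Let $i_j \in I$. Then $V(G - i_j) = \{a,b\} \cup (I \setminus \{i_j\})$. The surviving edges are exactly $ab$ together with $ai_k$ and $bi_k$ for $k \neq j$, since the only edges incident to $i_j$ are $ai_j$ and $bi_j$. This is precisely the edge set of a complete split graph with clique $\{a,b\}$ and independent set $I \setminus \{i_j\}$ of size $n-1$. Hence $G - i_j \cong CS_{(2,n-1)}$, with the weights inherited. A degenerate case must be handled: when $n=1$, the result is $CS_{(2,0)}$, which is just the edge $ab$ (that is, $K_2$). The paper's definition requires $n \ge 1$, so I would note explicitly that this case is read as $K_2$, or equivalently as the star $K_{1,1}$. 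Either reading is already covered by the complete-graph and star cases.

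\emph{Removing a center.} By the symmetry between $a$ and $b$, it suffices to remove $a$. Then $V(G-a) = \{b\} \cup I$. The edges incident to $a$, namely $ab$ and all $ai_k$, disappear, leaving exactly $\{bi_k : 1 \le k \le n\}$. Since $I$ is independent, $b$ is adjacent to every other vertex and no other edges exist. Hence $G - a$ is the star $K_{1,n}$ with center $b$, that is, $E(b, \{w(i_1),\ldots,w(i_n)\})$ in the paper's notation.

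There is no real obstacle here; the only care needed is the boundary case $n=1$ just described. I would also remark that these subgames are exactly the ones the later induction needs. The first is a smaller instance of the same class, and the second is a star, to which Theorem~\ref{teo:gulosa_estrela} applies.
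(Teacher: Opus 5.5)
Your proposal is correct and matches the paper's proof: both directly compute the vertex and edge sets after deleting an independent vertex (yielding $CS_{(2,n-1)}$) or a center (yielding the star $K_{1,n}$). Your explicit note on the boundary case $n=1$ is a small but worthwhile addition the paper omits: there the result is $K_2$, which lies outside the paper's definition requiring $n \geq 1$.
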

\begin{proof}
Removing an independent vertex $i_k$, the clique $\{a, b\}$ and the adjacencies with the remaining $n-1$ independent vertices are preserved, resulting in $CS_{(2,n-1)}$. Removing a center, say $a$, leaves $b$ adjacent to all $n$ independent vertices (mutually non-adjacent), forming $K_{1,n}$.
\end{proof}

\smallskip
\noindent \textbf{Example (Figure \ref{fig:cs23_exemplo}).} Initially, all five vertices are feasible: $i_1$ belongs to the triangle $\{a, b, i_1\}$, and each center belongs to three triangles. If we remove $i_3$, we obtain $CS_{(2,2)}$ with vertices $\{a(10), b(6), i_1(8), i_2(5)\}$. If we remove $a$, we obtain the star $K_{1,3}$ with center $b(6)$ and leaves $\{i_1(8), i_2(5), i_3(3)\}$.

\subsubsection{Lemmas on Stars $K_{1,n}$}

Given that Lemma \ref{lema:cs_sub} establishes that one of the subgames is a star, the following lemmas provide specific tools for analyzing the game value on that type of subgame.

\begin{lema}[Alternating Sums]\label{lema:alt_sum}
Let $l_1 \geq l_2 \geq \cdots \geq l_k > 0$ be a non-increasing sequence. Then:
\begin{enumerate}
    \item[(i)] $A_j \leq l_1$ for every $j \geq 0$.
    \item[(ii)] $A_{j-1} + l_j \leq l_1$ for every odd $j$ with $j \geq 1$.
\end{enumerate}
\end{lema}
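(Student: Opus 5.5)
The plan is to prove both parts by pairing consecutive terms of the alternating sum, using that the sequence is non-increasing, so every difference $l_{2t}-l_{2t+1}$ is non-negative.

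First record the two pairing identities. For any $j\ge 1$,
\[
A_j = l_1 - \sum_{t\ge 1,\ 2t\le j} \bigl(l_{2t} - l_{2t+1}\bigr),
\]
where, if $j$ is even, the last pair is incomplete and we read $l_{j+1}$ as $0$. Concretely:
\begin{itemize}
\item For odd $j=2s+1$ we have $A_j = l_1 - (l_2-l_3) - \cdots - (l_{2s}-l_{2s+1})$.
\item For even $j=2s$ we have $A_j = l_1 - (l_2-l_3) - \cdots - (l_{2s-2}-l_{2s-1}) - l_{2s}$.
\end{itemize}
Each bracket is $\ge 0$ by monotonicity, and $l_{2s}>0$ by positivity. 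Hence $A_j\le l_1$ for all $j\ge1$. The case $j=0$ is trivial since $A_0=0<l_1$. This proves (i).

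For (ii), let $j$ be odd, so $j-1$ is even and the last term of $A_{j-1}$, namely $l_{j-1}$, carries a minus sign. If $j=1$, then $A_0+l_1=l_1$ and there is nothing to prove. If $j\ge 3$, write $j-1=2s$. By the even-case identity,
\[
A_{j-1}+l_j = l_1 - \sum_{t=1}^{s-1}(l_{2t}-l_{2t+1}) - (l_{2s}-l_{2s+1}),
\]
because $l_j=l_{2s+1}$. All brackets are non-negative, so $A_{j-1}+l_j\le l_1$. Equivalently, $A_{j-1}+l_j=A_j$ for odd $j$, since the $j$-th term of $A_j$ enters with sign $+$; so (ii) follows directly from (i) applied to $A_j$.

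There is no real obstacle here. The only care needed is with index bookkeeping: handling the boundary $j=0$ and the incomplete final pair in the even case, which is where the positivity hypothesis $l_k>0$ is used. If a cleaner write-up is preferred, I would instead prove (i) by a short induction on $j$ in steps of two, using $A_{j+2}=A_j-(l_{j+1}-l_{j+2})$ for even $j$ and $A_{j+2}=A_j-(l_{j+1}-l_{j+2})$ for odd $j$ (the same formula in both cases), with base cases $A_0=0$ and $A_1=l_1$.
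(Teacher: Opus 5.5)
Your main argument is correct and is the paper's own pairing argument: you write $A_j = l_1 - (l_2-l_3) - \cdots$ with non-negative brackets (plus a trailing $-l_j$ when $j$ is even), and you note that $A_{j-1}+l_j = A_j$ for odd $j$, so (ii) is just (i) at odd indices; your regrouping also justifies the odd case more cleanly than the paper's remark ``since $l_j \le l_1$''. The one error is in the optional induction sketch at the end: for even $j$ the recurrence is $A_{j+2} = A_j + (l_{j+1}-l_{j+2})$ (e.g.\ $A_2 = A_0 + (l_1-l_2)$), so the even-indexed sums increase and a two-step induction from $A_0=0$ does not close; either drop that remark or get the even case from the odd one via $A_{2s} = A_{2s-1} - l_{2s} \le A_{2s-1}$.
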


\begin{proof}
\textbf{(i)} For even $j$: $A_j = l_1 - (l_2 - l_3) - (l_4 - l_5) - \cdots - l_j \leq l_1$, since each $(l_i - l_{i+1}) \geq 0$. For odd $j$: $A_j = (l_1 - l_2) + (l_3 - l_4) + \cdots + l_j \leq l_1$, since $l_j \leq l_1$.

\textbf{(ii)} $A_{j-1} + l_j = l_1 - (l_2 - l_3) - \cdots - (l_{j-1} - l_j) \leq l_1$, since each pair $(l_i - l_{i+1}) \geq 0$.
\end{proof}

The next lemma provides a closed formula for $\val(E(c, L))$, which will be used in the proof of the main theorem.

\begin{lema}[Star Value]\label{lema:val_estrela}
Let $E(c, L)$ be the star with center $c$ (weight $w(c)$) and leaves $L = \{l_1 \geq \cdots \geq l_k\}$. By the optimality of the greedy strategy on stars (Theorem~\ref{teo:gulosa_estrela}), the leaves are removed in non-increasing order and at the end the player to move chooses between $\{c, l_k\}$. Hence:
\[
\val(E(c, L)) = A_{k-1} + (-1)^{k-1} |l_k - w(c)|.
\]
The formula satisfies $\val(E(c,L)) = l_1 - \val(E(c, L \setminus \{l_1\}))$ for $|L| \geq 2$, with $\val(E(c, \{l\})) = |l - w(c)|$ and $\val(E(c, \emptyset)) = w(c)$.
\end{lema}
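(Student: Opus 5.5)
We prove the closed formula by induction on $k = |L|$, working directly from the recurrence~\eqref{eq:val} and not only from the greedy description, so that the identity $\val(E(c,L)) = l_1 - \val(E(c, L\setminus\{l_1\}))$ comes out of the argument.

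\emph{Base cases.} For $k=0$ the star is the single vertex $c$, so $\val = w(c)$. For $k=1$ the graph is the edge $c\,l_1$. Both vertices are feasible, and removing either leaves a single vertex. Hence
\[
\val = \max\{w(c) - l_1,\; l_1 - w(c)\} = |l_1 - w(c)|,
\]
which matches the formula because $A_0 = 0$.

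\emph{Inductive step, $k \ge 2$.} Now $c$ is a cut vertex, so the feasible moves are exactly the leaves. Removing leaf $l_m$ gives the star $E(c, L^{-m})$, and hence
\[
\val(E(c,L)) = \max_m \{\, l_m - \val(E(c,L^{-m})) \,\}.
\]
By Theorem~\ref{teo:gulosa_estrela}, the maximum is attained at $m=1$. So the recurrence identity $\val(E(c,L)) = l_1 - \val(E(c,L\setminus\{l_1\}))$ holds.

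\emph{Checking the formula.} Apply the induction hypothesis to $L' = L\setminus\{l_1\} = \{l_2 \ge \dots \ge l_k\}$. Its alternating sums are $A'_{j} = \sum_{i=1}^{j}(-1)^{i-1} l_{i+1}$, which satisfy $A'_{k-2} = l_1 - A_{k-1}$. Therefore
\[
l_1 - \val(E(c,L')) = l_1 - A'_{k-2} - (-1)^{k-2}|l_k - w(c)| = A_{k-1} + (-1)^{k-1}|l_k - w(c)|,
\]
which is exactly the claimed formula.

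\emph{Main obstacle.} The only nontrivial point is that leaf $l_1$ is an optimal move. The proof of Theorem~\ref{teo:gulosa_estrela} argues that removing the heaviest leaf leaves the opponent a weaker position. If a self-contained argument is wanted, we show by the same induction that $\val(E(c,L))$ is monotone non-decreasing in each leaf weight. This holds in the base case because the terms are a max of terms that are monotone, and it is preserved by the max recurrence.

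Given monotonicity, compare $L^{-m}$ with $L^{-1}$. Leafwise, $L^{-m}$ dominates $L^{-1}$: it contains $l_1$ where $L^{-1}$ contains $l_m$, with all other leaves equal. Monotonicity then gives $\val(E(c,L^{-1})) \le \val(E(c,L^{-m}))$. Combined with $l_1 \ge l_m$, this yields
\[
l_1 - \val(E(c,L^{-1})) \ge l_m - \val(E(c,L^{-m})),
\]
so $l_1$ is optimal.
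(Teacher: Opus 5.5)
Your main argument is correct and is essentially the paper's. The paper unrolls the greedy sequence given by Theorem~\ref{teo:gulosa_estrela}: leaves go in non-increasing order, then the last player chooses between $c$ and $l_k$. You formalize exactly this as an induction on $k$, using the theorem to get $\val(E(c,L)) = l_1 - \val(E(c,L\setminus\{l_1\}))$ and then the identity $A'_{k-2} = l_1 - A_{k-1}$. Both steps check out.

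The ``self-contained'' justification in your last paragraph, however, is false and should be dropped or replaced.

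\textbf{Monotonicity fails.} The value of a star is \emph{not} non-decreasing in the leaf weights. Already $\val(E(c,\{l\})) = |l - w(c)|$ decreases in $l$ for $l < w(c)$; the term $w(c) - l_1$ in your base-case maximum is decreasing, not increasing. The inequality you deduce from monotonicity fails too. For $w(c) = 10$ and $L = \{8,1\}$,
\[
\val(E(c,L^{-1})) = 9 > 2 = \val(E(c,L^{-2})).
\]
Both moves give $-1$ here, so $l_1$ is optimal, but not for your reason. Your inductive step could not work either: subgame values enter the recurrence with a minus sign, which reverses any monotonicity.

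\textbf{The paper has the same flaw.} The proof of Theorem~\ref{teo:gulosa_estrela} rests on the same false inequality $\val(G - l_1) \le \val(G - l_k)$. So citing the theorem's statement is fine, but its written proof is not a valid fallback.

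\textbf{What does hold.} By the same kind of induction, the value is $1$-Lipschitz in each leaf. If $L$ and $\tilde L$ differ in a single leaf, $x$ versus $y$, then
\[
\bigl|\val(E(c,L)) - \val(E(c,\tilde L))\bigr| \le |x-y| .
\]
For $k=1$ this is $\bigl|\,|x-w(c)| - |y-w(c)|\,\bigr| \le |x-y|$. For $k \ge 2$, both stars have exactly the leaves as feasible moves:
\begin{enumerate}
\item[(a)] Removing the differing leaf yields the same subgame, with gains $x$ versus $y$.
\item[(b)] Removing any other leaf yields subgames that differ in one leaf, so by induction the two terms differ by at most $|x-y|$.
\item[(c)] A maximum of terms, each perturbed by at most $|x-y|$, moves by at most $|x-y|$.
\end{enumerate}
Since $L^{-1}$ and $L^{-m}$ differ only in $l_m$ versus $l_1$, this gives $\val(E(c,L^{-1})) - \val(E(c,L^{-m})) \le l_1 - l_m$. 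That is precisely $l_1 - \val(E(c,L^{-1})) \ge l_m - \val(E(c,L^{-m}))$, the optimality of the heaviest leaf.
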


The two following lemmas relate the values of stars that differ by only one of the leaves (simulating the state of the game in which one must choose between two leaves as feasible moves). The central idea is that, given a center $c$ and leaves $L$, upon removing any leaf $l_m$ we obtain the star $E(c, L^{-m})$. The sum $\val(E(c,L)) + \val(E(c,L^{-m}))$ measures the ``total cost'' of having or not having $l_m$. Lemma~\ref{lema:soma_estrelas} provides an exact formula for this sum, and Lemma~\ref{lema:desig} bounds it from above.

\begin{lema}[Sum of Stars]\label{lema:soma_estrelas}
Let $L = \{l_1, \ldots, l_k\}$ (ordered non-increasingly) and let $c$ be the center with weight $w(c)$. Then:
\begin{enumerate}
    \item[(i)] $\val(E(c, L)) + \val(E(c, L^{-m})) = 2A_{m-1} + (-1)^{m-1} l_m$ \quad for $m < k$.
    \item[(ii)] $\val(E(c, L)) + \val(E(c, L^{-k})) = 2A_{k-2} + (-1)^{k-2} D$ \quad for $m = k$,
\end{enumerate}
where $D = l_{k-1} + |l_{k-1} - w(c)| - |l_k - w(c)|$.
\end{lema}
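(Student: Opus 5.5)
The argument is a direct computation from the closed formula of Lemma~\ref{lema:val_estrela}, applied to both stars $E(c,L)$ and $E(c,L^{-m})$. Removing $l_m$ leaves a non-increasing list of $k-1$ leaves, $l'_i = l_i$ for $i<m$ and $l'_i = l_{i+1}$ for $i\ge m$. We write $A'_j$ for its partial alternating sums and compute
\[
\val(E(c,L^{-m})) = A'_{k-2} + (-1)^{k-2}\,|l'_{k-1}-w(c)|.
\]
The sign factors $(-1)^{k-1}$ and $(-1)^{k-2}$ are opposite. Hence, whenever the last leaves of the two lists coincide, the absolute-value terms cancel in the sum.

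\textbf{Case (i), $m<k$.} The last leaf of $L^{-m}$ is still $l_k$, so $|l'_{k-1}-w(c)| = |l_k - w(c)|$. This term cancels exactly, and the sum reduces to $A_{k-1} + A'_{k-2}$. For $j\le m-1$ we have $A'_j = A_j$. For the tail, each $l_{i+1}$ with $i\ge m$ appears in $A'$ with sign $(-1)^{i-1}$, while in $A$ it has sign $(-1)^{i}$. So we can write
\[
A'_{k-2} = A_{m-1} - \sum_{i=m+1}^{k-1} (-1)^{i-1} l_i,
\qquad
A_{k-1} = A_{m-1} + (-1)^{m-1} l_m + \sum_{i=m+1}^{k-1}(-1)^{i-1} l_i .
\]
Adding the two, the tails cancel and we get $2A_{m-1} + (-1)^{m-1} l_m$, as claimed. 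The edge case $m=k-1$ is covered because the tail sum is then empty.

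\textbf{Case (ii), $m=k$.} Now $L^{-k} = \{l_1,\dots,l_{k-1}\}$, so $A'_{k-2} = A_{k-2}$, and its last leaf is $l_{k-1}$. We write $A_{k-1} = A_{k-2} + (-1)^{k-2} l_{k-1}$. The sum is then
\[
2A_{k-2} + (-1)^{k-2} l_{k-1} + (-1)^{k-1}|l_k-w(c)| + (-1)^{k-2}|l_{k-1}-w(c)|.
\]
Factoring out $(-1)^{k-2}$ gives exactly $2A_{k-2} + (-1)^{k-2} D$.

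\textbf{Boundary cases.} Small $k$ is where care is needed. For $k=1$ only case (ii) applies. There the formula needs $A_{-1}$ and $l_0$, so we check directly that $|l_1-w(c)| + w(c)$ matches whatever convention the paper adopts, and we note if the statement implicitly assumes $k\ge 2$. For $k=2$ in case (ii), $L^{-2}$ is the single-leaf star, with value $|l_1-w(c)|$ by Lemma~\ref{lema:val_estrela}, which is consistent with the general computation. The main obstacle is purely bookkeeping: getting the sign shift of the tail in case (i) right, and checking that the base conventions $A_0=0$ and $\val(E(c,\emptyset))=w(c)$ fit the closed formula in these boundary cases. I will verify these by substituting directly into the formula.
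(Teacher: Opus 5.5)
Your proof is correct, but it takes a different route from the paper.

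\textbf{The paper's route.} The paper never substitutes the closed formula into both stars. It sets $g(j)=\val(E(c,L_j))+\val(E(c,L_j^{-m}))$ with $L_j=\{l_{j+1},\dots,l_k\}$. It then uses only the peeling recursion $\val(E(c,L))=l_1-\val(E(c,L\setminus\{l_1\}))$ to get $g(j)=2l_{j+1}-g(j+1)$ while $l_{j+1}$ is the top leaf of both lists, and unrolls this from a base case:
\begin{itemize}
\item in case (i), $g(m-1)=l_m$, by one more application of the recursion, so $w(c)$ never appears;
\item in case (ii), $g(k-2)=D$, from the explicit two-leaf and one-leaf star values.
\end{itemize}

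\textbf{Your route.} You expand both values via $A_{k-1}+(-1)^{k-1}|l_k-w(c)|$ and track the index shift. In case (i) the absolute-value terms cancel because both lists end in $l_k$ with opposite sign parities. The tails of $A_{k-1}$ and $A'_{k-2}$ then cancel because of the one-step sign shift.

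\textbf{Comparison.} The paper's version shows the $w(c)$-independence in case (i) as a one-step cancellation. It needs only the one-step recursion (greedy optimality on stars with at least two leaves) plus the one-leaf value $|l-w(c)|$, not the full closed form. Yours is an explicit, non-inductive verification. It relies on the closed form holding for both list lengths, and it makes the sign bookkeeping easy to audit; the paper's ``solving down to $j=0$'' leaves that step implicit.

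\textbf{On $k=1$.} Your boundary remark is right: formula (ii) involves $l_0$ and $A_{-1}$, so (ii) is only meaningful for $k\ge 2$. The paper's base case $j=k-2$ tacitly assumes the same. You should state this restriction outright rather than deferring to an unspecified convention.
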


\begin{proof}
The strategy consists in defining $g(j) = \val(E(c, L_j)) + \val(E(c, L_j^{-m}))$, where $L_j = \{l_{j+1}, \ldots, l_k\}$, so that $g(0)$ is the desired sum. For $j < m-1$, the leaf $l_{j+1}$ is the maximum of both $L_j$ and $L_j^{-m}$ (since $l_m$, the leaf that differs, has index greater than $j+1$). Applying the recursion of Lemma~\ref{lema:val_estrela} to both and summing: $g(j) = 2l_{j+1} - g(j+1)$.

\textbf{(i) Case $m < k$:} In the base case $j = m-1$, the stars are $E(c, L_{m-1})$ (contains $l_m$) and $E(c, L_m)$ (does not contain $l_m$). Applying the recursion to the first one: $\val(E(c, L_{m-1})) = l_m - \val(E(c, L_m))$. Upon summing with $\val(E(c, L_m))$, a \emph{perfect cancellation} occurs: $g(m-1) = l_m$. This cancellation is the key property: the sum depends only on $l_m$ and on the preceding alternating sums, with no influence from the center~$c$. Solving down to $j = 0$: $g(0) = 2A_{m-1} + (-1)^{m-1} l_m$.

\textbf{(ii) Case $m = k$:} Here we remove $l_k$, the leaf that takes part in the final pair $\{c, l_k\}$, where the center interferes via $|l - w(c)|$. In the base case $j = k-2$, the stars are $E(c, \{l_{k-1}, l_k\})$ and $E(c, \{l_{k-1}\})$, with values $l_{k-1} - |l_k - w(c)|$ and $|l_{k-1} - w(c)|$. Unlike case~(i), the cancellation is \emph{not perfect}: $g(k-2) = D = l_{k-1} + |l_{k-1} - w(c)| - |l_k - w(c)|$, which depends on $w(c)$. Solving: $g(0) = 2A_{k-2} + (-1)^{k-2} D$.
\end{proof}

\smallskip
\noindent \textbf{Example (Figure \ref{fig:cs23_exemplo}).} Consider the star $E(b, L)$ with $L = \{8, 5, 3\}$ and $w(b) = 6$, obtained by removing the center $a$ of the example. By Lemma~\ref{lema:val_estrela}: $\val(E(b, L)) = A_2 + (-1)^2 |3 - 6| = (8 - 5) + 3 = 6$.

\emph{Case (i):} Removing $l_1 = 8$ (the largest leaf, $m = 1 < k = 3$), we obtain $L^{-1} = \{5, 3\}$ and $\val(E(b, \{5, 3\})) = 5 - |3 - 6| = 2$. The sum is $6 + 2 = 8$. By the formula: $2A_0 + (-1)^0 \cdot 8 = 0 + 8 = 8$. Note that the result does not depend on $w(b)$.

\emph{Case (ii):} Removing $l_3 = 3$ (the last leaf, $m = k = 3$), we obtain $L^{-3} = \{8, 5\}$ and $\val(E(b, \{8, 5\})) = 8 - |5 - 6| = 7$. The sum is $6 + 7 = 13$. By the formula: $D = 5 + |5 - 6| - |3 - 6| = 3$, hence $2A_1 + (-1)^1 \cdot 3 = 16 - 3 = 13$. Here the result depends on $w(b) = 6$, confirming the interference of the center.

\begin{lema}[Main Inequality]\label{lema:desig}
Let $l_1 \geq \cdots \geq l_k > 0$ be leaves, $c$ the center with weight $w(c)$, and $w(a)$ a value with $w(a) \geq l_i$ for every $i$ and $w(a) \geq w(c)$. Then, for every $m \in \{1, \ldots, k\}$:
\[
2w(a) - l_m \geq \val(E(c, L)) + \val(E(c, L^{-m})).
\]
\end{lema}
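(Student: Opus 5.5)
We split into the two cases of Lemma~\ref{lema:soma_estrelas} and bound the resulting closed form with Lemma~\ref{lema:alt_sum}, using the hypotheses $w(a)\geq l_1$ and $w(a)\geq w(c)$.

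\emph{Case $m<k$.} By Lemma~\ref{lema:soma_estrelas}(i), the right-hand side equals $2A_{m-1}+(-1)^{m-1}l_m$, so we must show $2w(a)\geq 2A_{m-1}+(-1)^{m-1}l_m+l_m$.
\begin{itemize}
\item If $m$ is even, the right side is $2A_{m-1}$. Lemma~\ref{lema:alt_sum}(i) gives $A_{m-1}\leq l_1\leq w(a)$.
\item If $m$ is odd, the right side is $2(A_{m-1}+l_m)$. Lemma~\ref{lema:alt_sum}(ii) gives $A_{m-1}+l_m\leq l_1\leq w(a)$.
\end{itemize}
This settles case (i) without using $w(c)$.

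\emph{Case $m=k$.} By Lemma~\ref{lema:soma_estrelas}(ii), the right-hand side is $2A_{k-2}+(-1)^{k}D$, where $D=l_{k-1}+|l_{k-1}-w(c)|-|l_k-w(c)|$. This case needs $k\geq 2$. For $k=1$ we instead check directly that $|l_1-w(c)|+w(c)\leq 2w(a)-l_1$. This holds because $|l_1-w(c)|+w(c)=\max(l_1,2w(c)-l_1)$, and both $l_1$ and $2w(c)-l_1$ are at most $2w(a)-l_1$ (the first since $l_1\leq w(a)$, the second since $w(c)\leq w(a)$).

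For $k\geq 2$, we first bound $D$ using the triangle inequality $\big||x-w(c)|-|y-w(c)|\big|\leq |x-y|$:
\[
l_{k-1}-(l_{k-1}-l_k)\leq D\leq l_{k-1}+(l_{k-1}-l_k),
\]
that is, $l_k\leq D\leq 2l_{k-1}-l_k$. We also need a bound that uses $w(c)$, namely $D\leq 2\max(l_{k-1},w(c))-l_k\leq 2w(a)-l_k$. Note that $|l_{k-1}-w(c)|-|l_k-w(c)|$ equals $l_{k-1}-l_k$ when $w(c)\leq l_k$, and is at most $2w(c)-l_{k-1}-l_k$ otherwise.
\begin{itemize}
\item If $k$ is odd, the sign of $D$ is negative. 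We need $2A_{k-2}-D+l_k\leq 2w(a)$. Since $D\geq l_k$, it suffices that $A_{k-2}\leq w(a)$, which is Lemma~\ref{lema:alt_sum}(i).
\item If $k$ is even, we need $2A_{k-2}+D+l_k\leq 2w(a)$. Here $k-1$ is odd, and Lemma~\ref{lema:alt_sum}(ii) gives $A_{k-2}+l_{k-1}\leq l_1$.
\begin{itemize}
\item If $w(c)\leq l_{k-1}$, then $D+l_k\leq 2l_{k-1}$, and the claim follows.
\item If $w(c)>l_{k-1}$, then $D=2w(c)-l_{k-1}-|l_k-w(c)|=l_k+2(w(c)-l_{k-1})-(w(c)-l_k)+\dots$. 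More simply, $D+l_k\leq 2w(c)$. The required inequality then becomes $A_{k-2}+w(c)\leq w(a)$, and this is the delicate point.
\end{itemize}
\end{itemize}

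The expected main obstacle is this last subcase, where $k$ is even and the center is heavy. Here $A_{k-2}$ can be as large as $l_1-l_2+\dots$, so we cannot simply discard it.

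The first attempt is to compute exactly. With $w(c)\geq l_{k-1}\geq l_k$, we get $D=l_{k-1}+(w(c)-l_{k-1})-(w(c)-l_k)=l_k$, so $D+l_k=2l_k\leq 2l_{k-1}$. The previous bound $A_{k-2}+l_{k-1}\leq l_1\leq w(a)$ then applies. So in fact $D+l_k\leq 2l_{k-1}$ in all subcases, and we should verify this carefully:
\begin{itemize}
\item If $w(c)\leq l_k$, then $D=2l_{k-1}-l_k$, so $D+l_k=2l_{k-1}$.
\item If $l_k<w(c)<l_{k-1}$, then $D=2l_{k-1}-2w(c)+l_k$, so $D+l_k=2l_{k-1}-2(w(c)-l_k)\leq 2l_{k-1}$.
\item If $w(c)\geq l_{k-1}$, then $D+l_k=2l_k\leq 2l_{k-1}$, as computed above.
\end{itemize}
This closes the even case. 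So the hypothesis $w(a)\geq w(c)$ appears to be needed only for $k=1$.
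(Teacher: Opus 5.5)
Your argument is correct and essentially matches the paper's proof. It uses the same case split through Lemma~\ref{lema:soma_estrelas}, closed by Lemma~\ref{lema:alt_sum} with $D\ge l_k$ for odd $k$ and $D+l_k\le 2l_{k-1}$ for even $k$. Your reverse-triangle-inequality bound $l_k\le D\le 2l_{k-1}-l_k$ already gives both at once, so the ``delicate'' heavy-center detour is unnecessary and should be cut. Two of its side claims are also false. The bound $|l_{k-1}-w(c)|-|l_k-w(c)|\le 2w(c)-l_{k-1}-l_k$ fails for $l_{k-1}=10$, $l_k=1$, $w(c)=2$. And for $w(c)>l_{k-1}$ one has $D=w(c)-|l_k-w(c)|=l_k$, not $2w(c)-l_{k-1}-|l_k-w(c)|$.

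Your separate check of $k=1$ is a genuine improvement over the paper. The paper's case $m=k$ applies Lemma~\ref{lema:soma_estrelas}(ii) with the undefined $l_0$ and $A_{-1}$, and never uses $w(a)\ge w(c)$. That hypothesis is exactly what $k=1$ requires, and this instance is invoked in Case~4 of Theorem~\ref{teo:cs_gulosa} when $n=2$.
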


\begin{proof}
\textbf{Case $m < k$.} By Lemma~\ref{lema:soma_estrelas}(i), we need $2w(a) - l_m \geq 2A_{m-1} + (-1)^{m-1} l_m$. If $m$ is odd, this is equivalent to $w(a) \geq A_{m-1} + l_m \leq l_1$ (Lemma~\ref{lema:alt_sum}(ii)). If $m$ is even, it is equivalent to $w(a) \geq A_{m-1} \leq l_1$ (Lemma~\ref{lema:alt_sum}(i)). In both, $w(a) \geq l_1$ guarantees the inequality.

\textbf{Case $m = k$.} By Lemma~\ref{lema:soma_estrelas}(ii), we need $2w(a) - l_k \geq 2A_{k-2} + (-1)^{k-2} D$, with
\[
D = \begin{cases}
2l_{k-1} - l_k & \text{if } w(c) \leq l_k, \\
2l_{k-1} - 2w(c) + l_k & \text{if } l_k < w(c) \leq l_{k-1}, \\
l_k & \text{if } w(c) > l_{k-1}.
\end{cases}
\]

If $k$ is even ($(-1)^{k-2} = 1$): we check $D + l_k \leq 2l_{k-1}$ in the three scenarios (immediate). Hence $2A_{k-2} + D + l_k \leq 2A_{k-2} + 2l_{k-1} = 2A_{k-1} \leq 2l_1 \leq 2w(a)$.

If $k$ is odd ($(-1)^{k-2} = -1$): we check $-D + l_k \leq 0$ in the three scenarios (immediate). Hence $2A_{k-2} - D + l_k \leq 2A_{k-2} \leq 2l_1 \leq 2w(a)$.
\end{proof}

\smallskip
\noindent \textbf{Example (Figure \ref{fig:cs23_exemplo}).} We verify Lemma~\ref{lema:desig} with the data of the example. Taking $w(a) = 10$ as the global maximum, center $c = b$ with $w(b) = 6$, and $L = \{8, 5, 3\}$:

For $m = 1$ (leaf $l_1 = 8$): $2 \cdot 10 - 8 = 12 \geq \val(E(b, \{8, 5, 3\})) + \val(E(b, \{5, 3\})) = 6 + 2 = 8$.

For $m = 3$ (leaf $l_3 = 3$): $2 \cdot 10 - 3 = 17 \geq \val(E(b, \{8, 5, 3\})) + \val(E(b, \{8, 5\})) = 6 + 7 = 13$.

This inequality will be applied directly in Cases~3 and~4 of Theorem~\ref{teo:cs_gulosa}.

\subsubsection{Main Theorem}

With the lemmas established, we can now prove the central result of this work.

\begin{teorema}[Optimality of the Greedy Strategy on $CS_{(2,n)}$]\label{teo:cs_gulosa}
For every $n \geq 1$, the greedy strategy is optimal on $CS_{(2,n)}$ in GG.
\end{teorema}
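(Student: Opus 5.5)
We argue by strong induction on $n$. The greedy strategy produces one fixed sequence of moves, so it is enough to show one thing at every position reached: the heaviest feasible vertex $v^*$ is an optimal move, that is,
\[
w(v^*) - \val(G - v^*) \geq w(v) - \val(G - v)
\]
for every feasible $v$. By Lemma~\ref{lema:cs_sub}, every subgame of $CS_{(2,n)}$ is either a $CS_{(2,n-1)}$, handled by the induction hypothesis, or a star $K_{1,n}$, handled by Theorem~\ref{teo:gulosa_estrela}. So the whole proof reduces to the first move in $CS_{(2,n)}$, where every vertex is feasible by Lemma~\ref{lema:cs_viavel}. The base case $n=1$ is a triangle, i.e.\ $K_3$, where greedy is optimal. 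Write $I=\{l_1\geq\cdots\geq l_n\}$ for the weights of the independent vertices and assume $w(a)\geq w(b)$. We split according to whether the maximum weight is attained by the center $a$ or by the leaf $l_1$.

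\textbf{Case A: $w(a)$ is the global maximum.} Greedy takes $a$, which yields $w(a)-\val(E(b,I))$.
\begin{enumerate}
\item[(1)] Against the move $b$: we need $\val(E(a,I))-\val(E(b,I))\geq w(b)-w(a)$. By Lemma~\ref{lema:val_estrela}, the two star values differ only in the final term, namely by $\pm\bigl(|l_n-w(a)|-|l_n-w(b)|\bigr)$. This quantity is at most $|w(a)-w(b)|$ in absolute value by the triangle inequality, which gives the inequality.
\item[(2)] Against a leaf $i_m$: by induction, greedy is optimal on $G-i_m\cong CS_{(2,n-1)}$, and $a$ is still its maximum. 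Hence $\val(G-i_m)=w(a)-\val(E(b,I^{-m}))$. The required inequality then rearranges exactly to
\[
2w(a)-l_m\geq \val(E(b,I))+\val(E(b,I^{-m})),
\]
which is Lemma~\ref{lema:desig} (when $n=1$, directly).
\end{enumerate}

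\textbf{Case B: $l_1$ is the global maximum.} Greedy takes $i_1$.
\begin{enumerate}
\item[(1)] Against another leaf $i_m$: the positions $G-i_1$ and $G-i_m$ are the same weighted $CS_{(2,n-1)}$ except that one leaf has weight $l_m$ instead of $l_1$. I would first prove a small auxiliary fact by induction on $|V|$ directly from Equation~\eqref{eq:val}: $\val$ changes by at most $\delta$ when a single vertex weight is raised by $\delta$. This is the usual $1$-Lipschitz property, since the same move sequences are available in both games. It gives $\val(G-i_1)-\val(G-i_m)\leq l_1-l_m$, which is precisely what is needed.
\item[(2)] Against a center, say $a$: we must show
\[
l_1-\val(G-i_1)\geq w(a)-\val(E(b,I)).
\]
\end{enumerate}

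Case B(2) is the step I expect to be the main obstacle, because $\val(G-i_1)$ is only described recursively. I would first unroll greedy inside $G-i_1$, which is legitimate by the induction hypothesis. The leaves $l_2,l_3,\dots$ are taken in order until either the leaves run out or a center becomes the heaviest vertex. At that point Case A applies, so
\[
\val(G-i_1)=A'_{j}\pm\bigl(w(a)-\val(E(b,\{l_{j+2},\dots,l_n\}))\bigr),
\]
where $A'_j$ is an alternating sum of $l_2,\dots,l_{j+1}$. The target inequality then becomes a comparison between alternating sums and two star values on nested leaf sets. I would handle it by applying the recursion of Lemma~\ref{lema:val_estrela} to $\val(E(b,I))$ the same number of times, which turns the comparison into one of the shape treated in Lemma~\ref{lema:soma_estrelas}. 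The remaining inequality should then follow from $w(a)\leq l_1$, Lemma~\ref{lema:alt_sum}, and the fact that $w(a)\leq l_{j+1}$ for the leaves taken before $a$ became maximal. This requires a parity split on $j$, similar to the one in the proof of Lemma~\ref{lema:desig}.

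If this bookkeeping becomes unmanageable, the fallback is to strengthen the induction hypothesis: state an explicit closed formula for $\val(CS_{(2,n)})$, namely greedy leaf removals followed by a star value. With that formula available, both B(1) and B(2) become direct algebraic comparisons.
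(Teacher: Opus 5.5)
Your case split matches the paper's: A(1), A(2), B(1), B(2) are its Cases~2, 3, 1, 4. A(1) and A(2) are argued essentially as there; your triangle-inequality form of A(1) is a tidier version of its two-scenario check. The routes differ in the two cases where a leaf is the maximum.

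In B(1), your $1$-Lipschitz bound $|\val(G-i_1)-\val(G-i_m)|\le l_1-l_m$ is sounder than the paper's Case~1. That case asserts that Bob's greedy reply in $G-v^*$ is $u$, which fails once a vertex heavier than $u$ remains. For example, take centers of weight $1$, leaves $10,9,1$, $v^*$ the leaf of weight $10$ and $u$ the leaf of weight $1$. Then $\val(G-u)-\val(G-v^*)=1-8$, which contradicts the paper's displayed inequality, although the inequality actually needed still holds. Your bound gives exactly the needed inequality without that assumption.

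In B(2), the paper never evaluates $\val(G-v^*)$. It bounds it over Bob's possible first moves there. A center reply is immediate. After a leaf reply $p_j$, it lower-bounds the continuation by Alice answering with $u$ and finishes with Lemma~\ref{lema:desig}. That one-step look-ahead needs no unrolling, does not use the induction hypothesis on $G-v^*$, and covers both centers at once. Your route instead uses the induction hypothesis to compute $\val(G-i_1)$ exactly. It costs more bookkeeping, but it turns the case into an identity and essentially yields your fallback closed formula.

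As written, B(2) is only a plan, but it does close. Let greedy in $G-i_1$ take $l_2,\dots,l_{j+1}$ and then $a$, so that $l_{j+1}\ge w(a)$ and either $j=n-1$ or $w(a)\ge l_{j+2}$. Unroll $\val(E(b,I))$ $j+1$ times, one more than inside $G-i_1$ because of $l_1$. The tails $\val(E(b,\{l_{j+2},\dots,l_n\}))$ then cancel exactly; for $j=n-1$ use $|l_n-w(b)|=l_n-w(b)$ and $\val(E(b,\emptyset))=w(b)$. This leaves
\[
l_1-\val(G-i_1)-\bigl(w(a)-\val(E(b,I))\bigr)=2A_{j+1}-\bigl(1+(-1)^j\bigr)w(a).
\]
The right-hand side is nonnegative: $A_{j+1}\ge 0$ for $j$ odd, and $A_{j+1}\ge l_{j+1}\ge w(a)$ for $j$ even. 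These are the lower-bound counterparts of Lemma~\ref{lema:alt_sum}, obtained by the same pairing.

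Finally, ``say $a$'' is not a symmetry once you have fixed $w(a)\ge w(b)$, because for the move $b$ the tails would not cancel. Dispose of $b$ with your A(1) estimate, which gives $w(a)-\val(E(b,I))\ge w(b)-\val(E(a,I))$ whenever $w(a)\ge w(b)$.
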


\begin{proof}
Strong induction on $n$. Base: $n = 1$ ($K_3$, trivial). Inductive step ($n \geq 2$): we assume (IH1) the greedy strategy is optimal on $CS_{(2,j)}$ for $j < n$ and (IH2) the greedy strategy is optimal on $K_{1,m}$ for every $m \geq 1$.

By Lemma~\ref{lema:cs_viavel}, every vertex is feasible. The greedy strategy takes $v^*$ (largest weight). By Lemma~\ref{lema:cs_sub}, $G - v^*$ is $CS_{(2,n-1)}$ or $K_{1,n}$, both covered by the IH. It remains to show that $w(v^*) - \val(G - v^*) \geq w(u) - \val(G - u)$ for every $u \neq v^*$.

\medskip

\textbf{Case 1: $v^*$ and $u$ are independent vertices.}
Both subgames are $CS_{(2,n-1)}$. By IH1, Bob takes $v^*$ in $G - u$ and $u$ in $G - v^*$. After these moves, both converge to $G' = CS_{(2,n-2)}$, hence:
\[
\val(G - u) - \val(G - v^*) \geq w(v^*) - w(u) \geq 0.
\]

\textbf{Case 2: $v^*$ and $u$ are centers.}
The subgames are $E(u, L)$ and $E(v^*, L)$, with $L = \{l_1 \geq \cdots \geq l_n\}$ (the independent vertices). By Lemma~\ref{lema:val_estrela}, the difference of the values lies only in the last term. The desired inequality is rewritten as:
\[
w(v^*) - w(u) \geq (-1)^{n-1}\bigl(|l_n - w(u)| - |l_n - w(v^*)|\bigr).
\]
Since $w(v^*) \geq l_n$ always, two scenarios remain. If $l_n \leq w(u)$: the right-hand side is $(-1)^{n-1}(w(u) - w(v^*))$, and the inequality holds since $|w(u) - w(v^*)| \leq w(v^*) - w(u)$. If $w(u) \leq l_n$: from the inequalities $w(u) \leq l_n \leq w(v^*)$ we obtain $|2l_n - w(u) - w(v^*)| \leq w(v^*) - w(u)$, which guarantees the inequality for both parities of $n$.

\medskip

\textbf{Case 3: $v^*$ is a center and $u$ is an independent vertex.}
Let $c'$ be the other center. After removing $v^*$: the star $E(c', L)$. After removing $u$: $CS_{(2,n-1)}$, in which $v^*$ (the global maximum) is present, so Bob takes $v^*$ and $E(c', L^{-u})$ remains. Substituting $\val(CS_{(2,n-1)}) = w(v^*) - \val(E(c', L^{-u}))$, the desired inequality becomes:
\[
2w(v^*) - w(u) \geq \val(E(c', L)) + \val(E(c', L^{-u})),
\]
which is exactly Lemma~\ref{lema:desig} with $w(v^*)$ as the global maximum.

\medskip

\textbf{Case 4: $v^*$ is an independent vertex and $u$ is a center.}

This is the most delicate case. The central difficulty is that, upon removing $v^*$ (the global maximum), the resulting subgame $CS_{(2,n-1)}$ no longer has a vertex of dominant weight, and we do not know \emph{a priori} which vertex Bob will optimally choose --- it may be a center or an independent vertex. In the previous cases, the identity of Bob's move was directly determined; here, we need to analyze three subcases.

Let $c'$ be the other center. After removing $u$ (a center): we obtain the star $E(c', L)$, where $L$ contains all $n$ independent vertices, including $v^*$. Since $v^*$ is the heaviest leaf ($w(v^*) \geq w(c')$), Bob takes $v^*$ and $\val(E(c', L)) = w(v^*) - \val(E(c', L^{-v^*}))$. After removing $v^*$ (an independent vertex): we obtain $CS_{(2,n-1)}$. The desired inequality reduces to:
\begin{equation}\label{eq:caso4}
2w(v^*) - w(u) \geq \val(CS_{(2,n-1)}) + \val(E(c', L^{-v^*})).
\end{equation}

We analyze according to Bob's optimal move in $CS_{(2,n-1)}$:

\emph{Bob takes $u$:} $E(c', L^{-v^*})$ remains, hence $\val(CS_{(2,n-1)}) = w(u) - \val(E(c', L^{-v^*}))$ and the right-hand side of~(\ref{eq:caso4}) simplifies to $w(u)$. The inequality $2w(v^*) - w(u) \geq w(u)$ holds since $w(v^*) \geq w(u)$.

\emph{Bob takes $c'$:} Analogous to the previous subcase by symmetry between the centers.

\emph{Bob takes an independent vertex $p_j$:} This is the non-trivial subcase. $CS'_{(2,n-2)}$ remains, hence $\val(CS_{(2,n-1)}) = w(p_j) - \val(CS'_{(2,n-2)})$. In the graph $CS'_{(2,n-2)}$, taking $u$ is a valid move (Lemma~\ref{lema:cs_viavel}), which provides a \emph{lower bound}:
\[
\val(CS'_{(2,n-2)}) \geq w(u) - \val(E(c', L^{-v^*,-j})).
\]
Substituting:
\[
\val(CS_{(2,n-1)}) + \val(E(c', L^{-v^*})) \leq w(p_j) - w(u) + \val(E(c', L^{-v^*})) + \val(E(c', L^{-v^*,-j})).
\]
By Lemma~\ref{lema:desig}: $\val(E(c', L^{-v^*})) + \val(E(c', L^{-v^*,-j})) \leq 2w(v^*) - w(p_j)$. Therefore, the right-hand side is at most $2w(v^*) - w(u)$, satisfying~(\ref{eq:caso4}).

\medskip
In all cases, $v^*$ is optimal. By the principle of induction, the greedy strategy is optimal on $CS_{(2,n)}$ for every $n \geq 1$.
\end{proof}

\begin{corollario}
If $|V(CS_{(2,n)})| = n + 2$ is even, then $\val(CS_{(2,n)}) \geq 0$: Alice does not lose with the greedy strategy.
\end{corollario}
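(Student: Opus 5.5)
The plan is a one-round look-ahead on the recursion~(\ref{eq:val}): Alice opens with the greedy move, and after any reply by Bob the position left for her is an even-sized graph already known to have nonnegative value. Throughout, $n$ is even and $n \geq 2$.

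\textbf{Step 1: the one-round bound.} Let $v^*$ be a vertex of largest weight. By Lemma~\ref{lema:cs_viavel} it is a feasible first move, so unfolding~(\ref{eq:val}) twice gives
\[
\val(CS_{(2,n)}) \;\geq\; w(v^*) - \val(G - v^*) \;=\; \min_{y}\bigl( w(v^*) - w(y) + \val(G - v^* - y) \bigr),
\]
where $y$ ranges over the feasible vertices of $G - v^*$. Since $v^*$ has maximum weight, $w(v^*) - w(y) \geq 0$ for every such $y$. It therefore suffices to show $\val(G - v^* - y) \geq 0$ for every feasible reply $y$. Each such graph has $n$ vertices, an even number.

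\textbf{Step 2: classify $G - v^* - y$.} By Lemma~\ref{lema:cs_sub}, $G - v^*$ is either $CS_{(2,n-1)}$ or a star $K_{1,n}$. In the star, the center is a cut vertex because $n \geq 2$, so Bob's reply $y$ must be a leaf. Hence the graph handed back to Alice is one of two kinds.
\begin{itemize}
\item If both $v^*$ and $y$ are independent vertices, it is $CS_{(2,n-2)}$.
\item If one of $v^*, y$ is a center, it is a star $E(c, L)$ with an odd number $k = n-1$ of leaves.
\end{itemize}
The case in which both centers are removed never occurs, because that graph would be disconnected.

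\textbf{Step 3: the star case.} For an odd number $k$ of leaves, Lemma~\ref{lema:val_estrela} gives
\[
\val(E(c,L)) = A_{k-1} + |l_k - w(c)|.
\]
Since $k-1$ is even, $A_{k-1} = (l_1 - l_2) + \cdots + (l_{k-2} - l_{k-1})$ is a sum of nonnegative pairs. Hence $\val(E(c,L)) \geq 0$.

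\textbf{Step 4: the $CS_{(2,n-2)}$ case, by induction on even $n$.} I set this up as an induction over even $n$ and apply the hypothesis to $CS_{(2,n-2)}$ when $n \geq 4$. For the base $n = 2$, the graph $CS_{(2,0)}$ is the edge $K_2$ on the two centers. Its value is $|w(a) - w(b)| \geq 0$ directly from~(\ref{eq:val}).

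\textbf{Step 5: tie to the greedy strategy.} Combining the cases gives $\val(CS_{(2,n)}) \geq 0$. By Theorem~\ref{teo:cs_gulosa}, the greedy sequence realizes $\val(CS_{(2,n)})$, so Alice does not lose with the greedy strategy.

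\textbf{Main obstacle.} The delicate step is Step~2: checking that no reply of Bob leads outside the two classes handled above. This rests on the cut-vertex observation for stars with at least two leaves, which is exactly where $n \geq 2$ is needed. Everything else reduces to the sign of $A_{k-1}$ for odd $k$, which is immediate.
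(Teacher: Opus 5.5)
Your proof is correct, and it takes a different route from the paper's.

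The paper argues along the greedy play itself. Theorem~\ref{teo:cs_gulosa} identifies $\val(CS_{(2,n)})$ with the outcome of the greedy sequence, and in that sequence each of Alice's picks is compared with Bob's next pick. You never look at the greedy sequence. You bound $\val$ directly by one round of look-ahead plus induction on even $n$, and you handle the star positions by the sign of $A_{k-1}+|l_k-w(c)|$ for odd $k$ (Lemma~\ref{lema:val_estrela}).

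What your route buys is that $\val(CS_{(2,n)})\geq 0$ rests only on the structural lemmas and the star formula. Theorem~\ref{teo:cs_gulosa} enters only in your final clause about the greedy strategy. This is worth something, because the theorem's case analysis is the delicate part of the paper. Its Case~1, for instance, has Bob answer $u$ in $G-v^*$, which greedy does only when $u$ is the heaviest remaining vertex.

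The paper's route is a two-line argument, but its claim $w(\text{Alice}_i)\geq w(\text{Bob}_i)$ tacitly needs Bob's pick to have been feasible when Alice moved. That holds because Alice always faces an even number of vertices, so a star center becomes feasible only on her turn with two vertices left. This is the same cut-vertex fact as your Step~2. Made explicit, the pairing argument shows that greedy Alice never loses against any Bob, which gives both conclusions without Theorem~\ref{teo:cs_gulosa} at all.
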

\begin{proof}
By Theorem~\ref{teo:cs_gulosa}, the greedy strategy is optimal. Since $|V|$ is even, Alice and Bob take $(n+2)/2$ vertices each. In the greedy sequence, Alice chooses before Bob in each pair of turns, always taking the heaviest available vertex. Hence $w(\text{Alice}_i) \geq w(\text{Bob}_i)$ for every $i$, and $\sum w(\text{Alice}) \geq \sum w(\text{Bob})$.
\end{proof}

\section{Conclusion}
\label{sec:conclusão}
In this work, we investigated the optimality of the greedy strategy for the Gold Grabbing Game on split graphs. We showed by means of a counterexample that the greedy approach is not optimal for general split graphs, and that this happens due to the absence of complete adjacency between the independent set and the clique, allowing the greedy move to ``release'' high-value vertices to the opponent. In contrast, we formally proved that the greedy strategy is optimal for the subclass of complete split graphs $CS_{(2,n)}$ for every $n \geq 1$. That is, the sequence of moves produced by the greedy strategy coincides with the sequence that maximizes $\val(G)$, a result that goes beyond the mere guarantee of victory, this result being extended to star graphs $K_{1,n}$. As a direct consequence, we obtained that Alice does not lose under optimal play on $CS_{(2,n)}$ with an even number of vertices, confirming the conjecture of \cite{seacrest-2012} for this class.

As future directions, we highlight: (i) the extension of the proof to general complete split graphs, where preliminary results indicate that an exchange argument settles the cases $c \geq 3$; (ii) the investigation of the optimality of the greedy strategy on threshold graphs, which are particular cases of split graphs and for which the greedy strategy appears to be optimal; and (iii) the precise characterization of which structural properties of a graph ensure or prevent the optimality of the greedy strategy, given that it fails on paths, cycles and general splits, but is optimal on complete graphs, stars and $CS_{(2,n)}$.

\bibliographystyle{plainnat}
\bibliography{references}

\end{document}